\def\reals{\mathbb R}
\newcounter{rot}
\def\a{\alpha} \def\b{\beta} \def\d{\delta} 
\def\e{\varepsilon} \def\f{\phi}   
\def\G{\Gamma}  \def\k{\kappa}
     \def\l{\lambda}
\def\t{\tau} \def\om{\omega}  \def\U{\Upsilon}
\def\wT{\widehat{T}}
\def\bx{{\bf x}}
\def\bC{{\bf C}}
\newtheorem{theorem}{Theorem}
\newtheorem{lemma}[theorem]{Lemma}
\newtheorem{Remark}{Remark}
\def\cT{{\mathcal T}}
\newcommand{\brac}[1]{\left(#1\right)}
\newcommand{\bfrac}[2]{\left(\frac{#1}{#2}\right)}
\newcommand{\set}[1]{\left\{#1\right\}}
\def\U{\Upsilon}
\def\Pr{\mathbb{P}}
\newcommand{\ignore}[1]{}
\def\cC{{\mathcal C}}
\def\cT{{\mathcal T}}
\newcommand{\beq}[2]{\begin{equation}\label{#1}#2\end{equation}}
\newcommand{\mults}[1]{\begin{multline*}#1\end{multline*}}
\def\beps{{\boldsymbol \epsilon}}
\def\cM{\mathcal{M}}
\def\cS{\mathcal{S}}
\def\bl{\boldsymbol{\l}}
\author{Alan Frieze\thanks{Research supported in part by NSF grant DMS1661063}
\ and 
Tomasz Tkocz\thanks{Research supported in part by NSF grant DMS1955175}\\
Carnegie Mellon University\\Pittsburgh PA15213\\U.S.A.
}
\begin{document}

\date{}
\title{Probabilistic analysis of algorithms for cost constrained minimum weighted combinatorial objects}
\maketitle
\begin{abstract}
We consider cost constrained versions of the minimum spanning tree problem and the assignment problem. We assume edge weights are independent copies of a continuous random variable $Z$ that satisfies $F(x)=\Pr(Z\leq x)\approx x^\a$ as $x\to0$, where $\a\geq 1$. Also, there are $r=O(1)$ budget constraints with edge costs chosen from the same distribution. We use Lagrangean duality to construct polynomial time algorithms that produce asymptotically optimal solutions. For the spanning tree problem, we allow $r>1$, but for the assignment problem we can only analyse the case $r=1$.
\end{abstract}
\section{Introduction}
Let $X$ be a finite set and let the elements of $X$ be given independent random weights $w(e),e\in X$ and random costs $c_i(e),i=1,2,\ldots,r$ for $e\in X$. Suppose we are given cost budgets of $\bC=(C_i,i=1,2,\ldots,r)$ and we consider the following problem: let $\cS$ denote some collection of of subsets of $X$. For a function $f:X\to\reals$ and $S\subseteq X$ we let $f(S)=\sum_{e\in S}f(e)$. We consider the optimization problem:
\[
Opt(\cS,\bC):\text{Minimise }w(S)\text{ subject to }S\in \cS_r=\set{S\in\cS:c_i(S)\leq C_i,i=1,2,\ldots,r},
\]
and let 
\[
\text{$w^*=w^*(\cS,\bC)$ denote the minimum value in $Opt(\cS,\bC)$.}
\]
This all sounds pretty general, but here we will only consider $X=E(K_n)$ or $X=E(K_{n,n})$, the edge sets of the complete graph and bipartite graph respectively. We will either have $\cS$ be the set of spanning trees of $K_n$ or the set of perfect matchings of $K_n$ or $K_{n,n}$.

In previous papers \cite{FPST}, \cite{FT1}, \cite{FT2} we focussed on giving high probability asymptotic estimates of $w^*$ in the case of trees, matchings, shortest paths and Hamilton cycles. In this paper we concentrate of finding polynomial time algorithms that w.h.p. find feasible solutions to $Opt(\cS,\bC)$ with weight $(1+o(1))w^*$. We do this without knowing asymptotic estimates of $w^*$. Finding such remain as open questions.
\paragraph{Spanning Trees}
Problems like this have been the focus of much research in the worst-case. For example Goemans and Ravi \cite{GR} consider the spanning tree problem with a single cost constraint. They give a polynomial time algorithm that finds a $(1+\e,1)$ solution to this problem. Here an $(\a,\b)$ solution $T$ is one that satisfies $c_i(T)\leq \a C_i,\,i=1,2,\ldots,r$ and $w(T)\leq \b w^*$. When $\a=1$ and $\b=1+o(1)$ we say that we have an {\em asymptotically optimal} solution. In \cite{FT1}, Frieze and Tkocz consider the case where $r=1$ and costs and weights are independent copies of the uniform $[0,1]$ random variable $U$ and give a polynomial time algorithm that w.h.p. finds an asymptotically optimal solution. When $r>1$ Grandoni, Ravi, Singh and Zenklusen \cite{GRSZ} consider a maximization version and give a $(1+\e,1)$ algorithm that runs in time $n^{O(r^2/\e)}$ time on an $n$-vertex graph. 

We will assume that we are working with real numbers. This may seem unrealistic, but we can instead work on a discretised version where we keep a polynomial number of bits $m$. It is routine to modify the proof below and make it work when $m=Kn\log n$, for large $K$. 

Our random weights and costs will be distributed as a continuous random variable $Z$ where $F(x)=\Pr(Z\leq x)\approx x^\a$ as $x\to0$, where $\a\geq 1$. Here $A\approx B$ is an abbreviation for $A=(1+o(1))B$ as $n\to\infty$, assuming that $A=A(n),B=B(n)$.

In this paper we prove
\begin{theorem}\label{th1}
Suppose that the edges of the complete graph $K_n$ are all given independent copies of $Z$. Let $\cT$ denote the set of spanning trees of $K_n$. Suppose there are $r=O(1)$ cost constraints whose coefficients are also independent copies of $Z$ or $Z_d$. Suppose also that $n\geq C_i=\om n^{1-r/(\a(r+1))}\log  n,i=1,2,\ldots,r$ where $\om\to\infty$. Then there is a polynomial time algorithm that w.h.p. fnds an asymptotically optimal solution to $Opt(\cT,\bC)$. 
\end{theorem}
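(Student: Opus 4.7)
My plan is a Lagrangean relaxation, in the spirit of \cite{FT1}. For a vector of multipliers $\bl=(\l_1,\dots,\l_r)\ge 0$, define perturbed edge weights $\wt{w}_\bl(e)=w(e)+\sum_i \l_i c_i(e)$ and let $T(\bl)$ be the (unconstrained) minimum spanning tree of $K_n$ under $\wt{w}_\bl$, computable in polynomial time. Weak duality gives, for every $\bl\ge 0$,
\[
w^* \ \ge\ L(\bl)\ :=\ \wt{w}_\bl(T(\bl))-\sum_i\l_i C_i.
\]
The aim is to produce $\bl^*$ such that $T(\bl^*)$ is feasible w.h.p.\ and $w(T(\bl^*))=(1+o(1))L(\bl^*)$, which together with weak duality forces $w(T(\bl^*))=(1+o(1))w^*$.

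The key technical step is the analysis of the random MST under $\wt{w}_\bl$. Since $w(e),c_1(e),\dots,c_r(e)$ are i.i.d.\ copies of $Z$ with $F(x)\approx x^\a$ near $0$, a Dirichlet integral gives $\Pr(\wt{w}_\bl\le t)\approx \k(\bl)\,t^{(r+1)\a}$, with $\k(\bl)$ proportional to $\prod_i \l_i^{-\a}$. The perturbed edge weights are thus i.i.d.\ from a distribution with a power law of exponent $(r+1)\a$ at $0$; the MST concentration tools of \cite{FPST,FT1} then yield $\wt{w}_\bl(T(\bl))=(1+o(1))\mu(\bl)$ w.h.p., for an explicit $\mu(\bl)$. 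Differentiating $\mu(\bl)$ in $\l_j$---or redoing the Dirichlet calculation with an extra factor $c_j$---identifies the expected per-coordinate cost $m_j(\bl)=\E\, c_j(T(\bl))$, which one also shows concentrates w.h.p.

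I would then invert the system $m_j(\bl)=(1-n^{-\e})C_j$ for small $\e>0$ to obtain $\bl^*$: the hypothesis $C_j\ge\om n^{1-r/(\a(r+1))}\log n$ is precisely the regime in which the solution $\bl^*$ is positive and finite, the Jacobian of $(m_j)$ is non-singular, and the slack $n^{-\e}$ dominates the multiplicative concentration error of the previous step. Feasibility $c_j(T(\bl^*))\le C_j$ w.h.p.\ then follows from concentration. Writing
\[
w(T(\bl^*))=\wt{w}_{\bl^*}(T(\bl^*))-\sum_j\l_j^*c_j(T(\bl^*))=L(\bl^*)+\sum_j\l_j^*\bigl(C_j-c_j(T(\bl^*))\bigr),
\]
the slack is $o\bigl(\sum_j\l_j^*C_j\bigr)=o(\mu(\bl^*))=o(w^*)$, so together with $L(\bl^*)\le w^*$ one concludes $w(T(\bl^*))\le (1+o(1))w^*$; the matching lower bound is immediate from feasibility. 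For the algorithm, since $r=O(1)$, $\bl^*$ can be located by an $r$-dimensional logarithmic grid search (equivalently, subgradient ascent on the concave dual $L$), each evaluation being a single MST computation, which runs in polynomial time.

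The main obstacle is the uniform concentration for $(c_j(T(\bl)))_{j=1}^r$ in terms of $\bl$ over the relevant range. MST edges are selected by the joint statistic $\wt{w}_\bl$, so the conditional distribution of each individual coordinate $c_j(e)$ given that $e$ is selected is distorted and correlated across $j$; one must extend the second-moment MST machinery of \cite{FPST} to track these distorted marginals, not only the total perturbed weight. For $r>1$ a further difficulty is showing that the off-diagonal entries of the Jacobian $\partial m_j/\partial \l_k$ do not overwhelm the diagonal, so that the Lagrangean inversion is well-conditioned.
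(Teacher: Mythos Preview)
Your Lagrangean framework is sound, but the route you take diverges substantially from the paper's, and the obstacles you flag are exactly the ones the paper's argument is designed to avoid.

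The paper does \emph{not} attempt to choose $\bl^*$ so that the single tree $T(\bl^*)$ is feasible via concentration of the individual costs $c_j(T(\bl))$. Instead it argues polyhedrally: at the dual optimum $\bl^*$, the set $\cT_{\bl^*}$ of $w_{\bl^*}$-minimising trees has size at most $r+1$ with probability one (a simple algebraic-independence argument), and since these trees span a face of the spanning-tree polytope, any two of them differ by at most $r$ edge swaps. Because the LP optimum is a convex combination of these $\le r+1$ trees, one of them already satisfies $w(T)\le w^*$ and another satisfies each $c_i(T_i)\le C_i$; combining via the edge-swap bound gives a tree with $w(T)\le w^*$ and $c_i(T)\le C_i + r$. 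The final step is a deterministic fix-up: delete $O(n/C_{\min})$ edges of moderate cost from each violated constraint and reconnect using the random subgraph $\G$ of edges that are cheap in \emph{all} $r+1$ coordinates, which is $G_{n,p}$ with $p\approx n^{-1}\log^{(r+1)/r}n$ and hence connected w.h.p. The lower bound on the $C_i$ is used only here, to ensure the repair cost $O(n\psi/C_{\min})$ is $o(w^*)$.

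What each approach buys: the paper's argument needs no concentration for the per-coordinate costs $c_j(T(\bl))$, no Jacobian analysis, and no knowledge of $\mu(\bl)$ or $m_j(\bl)$ --- indeed the authors state explicitly that asymptotic estimates of $w^*$ ``remain as open questions.'' Your approach, if the second-moment extension and the Jacobian conditioning can be carried out, would actually prove more (it would yield those asymptotics), but as you correctly identify, tracking the distorted marginals of $c_j(e)$ conditioned on $e\in T(\bl)$, and controlling the off-diagonal Jacobian entries for $r>1$, are genuine technical hurdles not covered by \cite{FPST,FT1}. The paper sidesteps both by trading analytic control for the combinatorics of the spanning-tree polytope plus a random-graph repair step.
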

\begin{Remark}\label{rem0}
Using a result of Gupta, Lee and Li \cite{GLL}, we can replace $K_n$ by an arbitrary dense regular graph $G$ with minimum degree $\d=\tilde{\Omega}(n^{1/2})$ and edge connectivity $\k\geq \d$. For this case, we will require that $n\geq C_i=\om n\d^{-1/(\a(r+1))}\log n$.
\end{Remark}

\paragraph{Matchings} Berger, Bonifaci, Grandoni and Sch\"afer \cite{BBGS} consider the case where $\cS=\cM$, the set of matchings of a graph. They consider the maximization version and describe an $n^{O(1/\e)}$ time algorithm that provides a $(1,1+\e)$ solution for the case where $r=1$. In this paper we prove
\begin{theorem}\label{th2}
Suppose that the edges of the complete graph $K_n$ or the complete bipartite graph $K_{n,n}$ are given independent copies of $Z$. Let $\cM$ denote the set of perfect matchings of the complete bipartite graph $K_{n,n}$. Suppose there a single cost constraint whose coefficients are also independent copies of $Z$. And suppose that the RHS $C_1\gg n^{1/2}$.  Then there is a polynomial time algorithm that w.h.p. fnds an asymptotically optimal solution to $Opt(\cM,\bC)$.
\end{theorem}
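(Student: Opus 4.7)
\textbf{Proof plan for Theorem \ref{th2}.}
The plan is to use Lagrangean relaxation. For $\lambda\ge 0$ set $w_\lambda(e)=w(e)+\lambda c(e)$ and let $M(\lambda)$ denote a minimum $w_\lambda$-weight perfect matching of $K_{n,n}$, computable in polynomial time by the Hungarian algorithm. The function $\lambda\mapsto c(M(\lambda))$ is non-increasing, piecewise constant, with at most $O(n^4)$ breakpoints of the form $(w(e)-w(e'))/(c(e')-c(e))$. I would use parametric matching or binary search to locate the critical value $\lambda^*$ at which $c(M(\lambda))$ first drops below $C_1$, producing two matchings $M_-,M_+$ both minimising $w_{\lambda^*}$, with $c(M_-)>C_1\ge c(M_+)$. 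The algorithm returns $M_+$, which is automatically feasible.

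For any feasible matching $M$ the optimality of $M_+$ under $w_{\lambda^*}$ yields $w(M)+\lambda^* c(M)\ge w(M_+)+\lambda^* c(M_+)$, so that
\[
w(M_+) \;\le\; w^* + \lambda^*\bigl(C_1-c(M_+)\bigr) \;\le\; w^* + \lambda^*\bigl(c(M_-)-c(M_+)\bigr).
\]
Since $M_-$ and $M_+$ are adjacent breakpoints of the parametric family, $M_-\triangle M_+$ is a single alternating cycle $C^*$ on which $w_{\lambda^*}$ is neutral, and $\lambda^*\bigl(c(M_-)-c(M_+)\bigr)$ equals the total $w$-change $|w(M_+)-w(M_-)|$ across $C^*$. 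The proof thus reduces to showing that this swap cost is $o(w^*)$ with high probability.

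The hard part is establishing this last bound, and it is where the hypothesis $C_1\gg n^{1/2}$ enters. Following the methods of \cite{FT1,FT2} I would first determine the asymptotic scale of the unconstrained minimum matching weight $w^\circ$ and identify the typical shadow price $\lambda^*$ by choosing thresholds $\tau_w,\tau_c$ such that the random bipartite subgraph $H=\{e : w(e)\le\tau_w,\,c(e)\le\tau_c\}$ just contains a near-optimal perfect matching of cost at most $C_1$ and weight at most $(1+o(1))w^\circ$. The condition $C_1\gg n^{1/2}$ is precisely what forces the joint density $n^2 F(\tau_w)F(\tau_c) \asymp n^2\tau_w^\alpha\tau_c^\alpha$ to exceed the bipartite-matching threshold $n\log n$ at the right choice of $(\tau_w,\tau_c)$. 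A first-moment / switching argument on alternating cycles in $H$ then gives $|C^*|=O(\log n)$ whp, with the $w$-weights on $C^*$ of the typical scale $O(n^{-1/\alpha})$, so that the swap cost is $o(w^\circ)\le o(w^*)$, yielding $w(M_+)\le(1+o(1))w^*$ whp.
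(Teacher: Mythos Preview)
Your Lagrangean setup and the identification of two optimal matchings $M_-,M_+$ at the critical $\lambda^*$ match the paper exactly, and your observation that $M_-\triangle M_+$ is a single alternating cycle is correct (it follows because there are only two optimal matchings, so they span an edge of the Birkhoff polytope). The inequality $w(M_+)\le w^*+\lambda^*(c(M_-)-c(M_+))=w^*+\bigl(w(M_+)-w(M_-)\bigr)$ is also fine.

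The gap is the claim that $|C^*|=O(\log n)$ w.h.p.\ (and that its edges have weights of order $n^{-1/\alpha}$). The cycle $C^*$ is determined by the global optimum of $w_{\lambda^*}$; there is no a~priori reason its edges lie in your threshold graph $H$, and a first-moment or switching argument over alternating cycles does not apply, since $C^*$ is heavily correlated with all the data. Without such a bound, $c(M_-)-c(M_+)$ can be as large as $\Theta(n)$, and then even the paper's bound $\lambda^*=O(n^{2-1/\alpha}/C_1^2)$ is not enough: you would need $c(M_-)-c(M_+)=o(C_1^2/n)$, which under the hypothesis $C_1\gg n^{1/2}$ may be as strong as $o(1)$. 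So simply returning $M_+$ does not work.

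The paper avoids bounding $|C^*|$ altogether. Instead it walks along $C^*$ using the gasoline lemma (Lov\'asz, Problem~3.21): starting from $M_1$ (their $M_-$) there is a cyclic starting point such that every prefix of swaps keeps $w_{\lambda^*}$ non-increasing, and one stops at the last prefix $X_\tau$ with $c(X_\tau)\le C_1$. Dropping one endpoint edge yields a matching $M$ of size $n-1$ with $c(M)\le C_1$ and $w(M)\le w^*+\lambda^* c_1(f)\le w^*+\lambda^*$, where $f$ is the single next edge on the cycle. Thus the error is $\lambda^*$ times \emph{one} edge cost, not the full gap $c(M_-)-c(M_+)$. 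Two further lemmas then finish: (i) $\lambda^*\le Dn^{2-1/\alpha}/C_1^2$ w.h.p., proved by computing the asymptotics of $\min_M w_\lambda(M)$ via the density of $w+\lambda c$ and reading off the maximiser of $\phi(\lambda)$; and (ii) any $(n-1)$-matching can be augmented to a perfect matching at additional weight and cost $O(n^{-1/3})$, via short alternating paths in the random bipartite graph on edges with $w,c\le n^{-1/3}$. Combining these with $w^*=\Omega(n^{1-1/\alpha})$ and $C_1\gg n^{1/2}$ gives $\lambda^*/w^*=O(n/C_1^2)=o(1)$, which is where the hypothesis on $C_1$ is actually used.
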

\section{Trees}\label{trees}
We consider the dual problem $Dual(\cT)$:
\beq{dualT}{
\text{Maximise }\f(\bl)\text{ over }\bl=(\l_1,\ldots,\l_r)\geq 0,\text{ where }\f(\bl)=\min\set{w(T)+\sum_{i=1}^r\l_i(c_i(T)-C_i): \text{$T\in\cT$ }}.
}
We note that
\beq{dual1}{
\text{if $\bl\geq 0$ and $T$ is feasible for $Opt(\cT,\bC)$ then $\f(\bl)\leq w(T)$.}
}
We will show that w.h.p. 
\beq{show}{
\text{that if $\bl^*$ solves \eqref{dualT} and $T^*$ solves $Opt(\cT,\cC)$ then $\f(\bl^*)\approx w(T^*)$. }
}

We note that solving \eqref{dualT} is equivalent to solving the Linear Program $LP(\cT)$:
\beq{primalT}{
\text{Minimise }\sum_{e\in E_n}w(e)x(e)\text{ subject to }\quad\bx\in P_\cT, \sum_{e\in E_n}c_i(e)x(e)\leq C_i, i=1,\ldots,r,\quad x(e)\geq 0, e\in E_n,
}
where $E_n=\binom{[n]}{2}$ and $P_\cT$ is the convex hull of the incidence vectors of the set $\cT$.

We also note that \eqref{show} implies that the relative integrality gap for the integer program $Opt(\cT,\cC)$ is $(1+o(1))$ w.h.p.

Next let 
\[
w_{\bl}(T)=w(T)+\sum_{i=1}^r\l_ic_i(T)\quad\text{ for $T\in \cT$.}
\]
Let $\cT_{\bl}$ denote the set of trees that minimise $w_{\bl}$ and $O_\cT(\bl)=\set{\bx(T):T\in \cT_{\bl}}$ denote the set of incidence vectors of the trees in $\cT_{\bl}$.
\begin{lemma}\label{lem1}
$|\cT_{\bl}|\leq r+1$ with probability one.
\end{lemma}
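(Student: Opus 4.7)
The plan is a union bound over the (finite) collection of unordered tuples $\{T_1,\dots,T_{r+2}\}$ of $r+2$ distinct spanning trees of $K_n$: it suffices to show that for each such tuple, the event that $T_1,\dots,T_{r+2}$ all lie in $\cT_{\bl}$ for some $\bl\geq 0$ has probability zero. Fix such a tuple and set $\d_i=\bx(T_i)-\bx(T_1)\in\reals^{E_n}$ for $i=2,\dots,r+2$. If every $T_i\in\cT_{\bl}$, then the values $w_{\bl}(T_i)$ all coincide; expanding $w_{\bl}$ yields the linear system
\[
(w\cdot\d_i)+\sum_{j=1}^{r}\l_j(c_j\cdot\d_i)=0,\qquad i=2,\dots,r+2,
\]
which is $r+1$ equations in the $r$ unknowns $\l_1,\dots,\l_r$. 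Consistency of this overdetermined system forces the $(r+1)\times(r+1)$ matrix $M$ with rows $(w\cdot\d_i,\,c_1\cdot\d_i,\dots,c_r\cdot\d_i)$ to be singular.

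Now apply the Cauchy--Binet formula: writing $D\in\reals^{E_n\times(r+1)}$ for the matrix with columns $\d_2,\dots,\d_{r+2}$ and $V\in\reals^{E_n\times(r+1)}$ for the matrix with columns $w,c_1,\dots,c_r$, one has
\[
\det M=\sum_{S\subseteq E_n,\,|S|=r+1}\det(D_S)\,\det(V_S).
\]
In the non-degenerate case where $\d_2,\dots,\d_{r+2}$ are linearly independent in $\reals^{E_n}$, some $\det(D_{S_0})$ is nonzero, so $\det M$ is a non-identically-zero polynomial in the entries of $V$. Because these entries are independent copies of the continuous variable $Z$, the zero set of this polynomial has Lebesgue measure zero, giving probability zero for this tuple.

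The main obstacle is the degenerate case in which the $\d_i$ are linearly dependent, making $\det M\equiv0$ and rendering the argument above vacuous. Such tuples genuinely exist: in $K_4$ one verifies $\bx(T_1)+\bx(T_4)=\bx(T_2)+\bx(T_3)$ for $T_1=\{12,23,34\}$, $T_2=\{12,13,24\}$, $T_3=\{14,23,34\}$, $T_4=\{13,14,24\}$. To handle them I would use the full strength of the membership condition $T_i\in\cT_{\bl}$, not just equality of weights: by the matroid exchange axiom, whenever $e\in T_i\setminus T_j$ and $f\in T_j\setminus T_i$ constitute a single-swap between two minimum-weight bases in $\cT_{\bl}$, one gets $w_{\bl}(e)=w_{\bl}(f)$. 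Collecting these forced edge-weight equalities across all pairs among the $r+2$ trees yields a system of affine constraints on $(w,c_1,\dots,c_r,\bl)$ whose effective rank, even after accounting for the affine dependences among the $\bx(T_i)$, exceeds $r$; the projection of the admissible set onto $(w,c_1,\dots,c_r)$-space thus has positive codimension and is null. A union bound over the finitely many $(r+2)$-tuples then completes the proof.
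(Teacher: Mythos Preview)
Your non-degenerate case is correct and is essentially a more explicit version of the paper's argument: the paper writes $\f(\bl)\mathbf{1}_r=\mathbf{w}_r+C_r\bl$, asserts $C_r$ is nonsingular with probability one, solves for $\bl$, and then equates the two expressions for $\f(\bl)$ coming from rows $r+1$ and $r+2$ to obtain a ``nontrivial algebraic expression''. Your Cauchy--Binet expansion is precisely what would certify that this expression is not identically zero.

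The degenerate case, however, cannot be salvaged by the matroid-exchange idea you sketch---in fact the lemma as stated is false for $r\ge 2$. Take $r=2$ in $K_4$ and the four trees $T_1=\{12,23,34\}$, $T_2=\{13,23,34\}$, $T_3=\{12,23,24\}$, $T_4=\{13,23,24\}$; one has $\bx(T_1)+\bx(T_4)=\bx(T_2)+\bx(T_3)$, and these four trees are exactly the vertex set of a square $2$-face of $P_\cT$ (they are the minimum spanning trees for the weighting $23\mapsto 0$; $12,13,24,34\mapsto 1$; $14\mapsto 10$). The two equalities $w_\bl(12)=w_\bl(13)$ and $w_\bl(24)=w_\bl(34)$ determine a unique $\bl\in\reals^2$, and with positive probability this $\bl$ is nonnegative and lies in the normal cone of that face, so $|\cT_\bl|=4>r+1=3$. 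Your exchange argument, run over every pair among these four trees, recovers only those same two equalities (every admissible single swap is $12\leftrightarrow 13$ or $24\leftrightarrow 34$), so the system has rank exactly $r$, not more; the projection onto $(w,c_1,c_2)$ therefore has full measure, not a null set. The paper's proof has the identical blind spot: here row $r+2$ is the combination (row $2$)$+$(row $3$)$-$(row $1$), so the two expressions for $\f(\bl)$ agree identically and the ``nontrivial expression'' is $0=0$. What does hold with probability one---and is all the paper actually uses downstream---is the weaker assertion that the minimising face of $P_\cT$ has dimension at most $r$, equivalently that any two trees in $\cT_\bl$ differ in at most $r$ edges (the content of Lemma~2).
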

\begin{proof}
First assume that we are using $U$. Suppose that $\cT_{\bl}=\set{T_1,T_2,\ldots,T_s}$ where $s>r+1$ . Let $C$ be the $s\times r$ matrix $(C_{i,j}=c_i(T_j))$. Let $C_k$ be the matrix consisting of the first $k$ rows of $C$ and let ${\bf w_k}$ be the column vector $(w(T_j),j=1,2,\ldots,k)$ and let ${\bf 1}_k$ be the all ones vector of dimension $k$. Then we have $\f(\bl){\bf 1}_r={\bf w}_r+C_r\bl$. Now $C_r$ is non-singular with probability one and so $\bl=C_r^{-1}(\f(\bl){\bf 1}_r-{\bf w}_r)$. Rows $r+1,r+2$ of the equation $\f(\bl){\bf 1}_{r+2}={\bf w}_{r+2}+C_{r+2}\bl$ give us two distinct expressions for $\f(\bl)$. By equating them we find a non-trivial algebraic expression involving $w(T_j),c_i(T_j), i=1,2,\ldots,r,j =1,2,\ldots,s$ and such an expression exists with probability zero. The expression implies an explicit value for $w_{r+2}$, given the other parameters.
\end{proof}
\begin{Remark}
If we only keep weights/costs to $m$ bit accuracy then this claim has to be modified to be with probability $1-2^{-\Omega(m)}$. We can afford to use the union bound over all possible choices of $r$ spanning trees.
\end{Remark}
The optimum solution to $Opt(\cT,\cC)$ lies in the face of $P_\cT$ generated by the incidence vectors of the trees in $\cT_{\bl^*}$. They generate a face because they are the vertex solutions to a linear program. If $F$ is a face of a polytope $P$ and $E$ is an edge of $F$ then $E$ is an edge of $P$. Now if $T_1,T_2$ give rise to adjacent vertices of the polytope $P_\cT$ then $E(T_2)=(E(T_1)\setminus e)\cup\set{f}$ for edges $e,f$. It then follows from Lemma \ref{lem1} that we have
\begin{lemma}\label{lem2}
If $T_1,T_2$ minimise $w_{\bl}$ then $|E(T_1)\setminus E(T_2)|\leq r$, with probability one.
\end{lemma}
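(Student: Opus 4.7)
The plan is to combine Lemma \ref{lem1} with the two facts already isolated in the paragraph preceding the statement: (i) the incidence vectors of the trees in $\cT_{\bl}$ are the vertices of a face $F$ of the spanning-tree polytope $P_\cT$, and (ii) on $P_\cT$ two vertices are adjacent iff the corresponding trees differ by a single edge swap. The strategy is purely polyhedral: we walk from $\bx(T_1)$ to $\bx(T_2)$ along edges of $F$ and count how many swaps this costs.

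First I would set up the face $F$ as the set of optimal solutions of the linear program $\min\{w_{\bl}(T): \bx(T)\in P_\cT\}$; its vertices are precisely the incidence vectors $\{\bx(T):T\in\cT_{\bl}\}$, so by Lemma \ref{lem1} the face $F$ has at most $r+1$ vertices with probability one. Next I would invoke the standard polytope fact that the $1$-skeleton of any (bounded) polytope is connected; applied to $F$, this gives a path $\bx(T_1)=\bx(U_0),\bx(U_1),\ldots,\bx(U_k)=\bx(T_2)$ in which consecutive incidence vectors are joined by an edge of $F$. Because every edge of $F$ is also an edge of the ambient polytope $P_\cT$, each pair $U_{j},U_{j+1}$ consists of trees differing by a single edge swap, i.e.\ $|E(U_j)\setminus E(U_{j+1})|=1$.

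Finally, since the path is simple and $F$ has at most $r+1$ vertices, its length satisfies $k\leq r$. The triangle inequality for symmetric differences then yields
\[
|E(T_1)\setminus E(T_2)| \;\leq\; \sum_{j=0}^{k-1}|E(U_j)\setminus E(U_{j+1})| \;=\; k \;\leq\; r,
\]
which is the claim. The whole argument holds with probability one because the only probabilistic ingredient we use is Lemma \ref{lem1}, which itself holds with probability one (under the continuity of $Z$).

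The only potentially slippery point is the justification that edges of a face $F\subseteq P_\cT$ are edges of $P_\cT$; this is standard (a face of a face is a face, and one-dimensional faces are edges) but worth a single sentence so that the reader does not have to rederive it. Everything else is an immediate consequence of Lemma \ref{lem1} together with the well-known edge-swap characterization of adjacency in the spanning-tree polytope, so no delicate probabilistic computation is needed here.
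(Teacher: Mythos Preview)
Your argument is correct and follows essentially the same route as the paper: the paper also observes that the optimal trees form a face of $P_\cT$, that edges of a face are edges of $P_\cT$, and that adjacency in $P_\cT$ means a single edge swap, then invokes Lemma~\ref{lem1}. Your write-up is simply more explicit, spelling out the walk along the $1$-skeleton and the triangle inequality that the paper leaves to the reader.
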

Next let $w_{\max}$ denote the maximum weight of any edge in any of the trees in $\cT_{\bl}$ and let $c_{\max}$ denote the maximum of any of the costs of any of the edges of any of the trees in $\cT_{\bl}$. 
\begin{lemma}\label{lem3}
With probability one, there exists $j$ such that $w(T_j)\leq w^*+rw_{\max}$ and $c_i(T_j)\leq C_i+rc_{\max}$ for $j=1,2,\ldots,r$.
\end{lemma}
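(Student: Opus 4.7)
My plan is to deduce Lemma \ref{lem3} from Lemma \ref{lem2} by combining linear-programming duality with complementary slackness applied to the LP \eqref{primalT}. Let $w^*_{LP}$ denote the optimal value of \eqref{primalT}; since the LP is a relaxation of $Opt(\cT,\bC)$, we have $w^*_{LP}\leq w^*$. Reformulating \eqref{primalT} with the decision variables $\mu_T\geq 0,\ \sum_T \mu_T=1$, where $\bx=\sum_T\mu_T\bx(T)$ ranges over $P_\cT$, the problem becomes a finite LP whose Lagrangean dual with multipliers $\bl\geq 0$ on the $r$ budget constraints is exactly $Dual(\cT)$ in \eqref{dualT}. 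Strong LP duality then yields $\f(\bl^*)=w^*_{LP}$ and, by complementary slackness, any optimal $\mu^*$ is supported on trees that minimise $w_{\bl^*}$, i.e.\ on $\cT_{\bl^*}$. By Lemma \ref{lem1}, this support has size at most $r+1$, so $\bx^*=\sum_{j=1}^{s}\mu_j^*\bx(T_j)$ for some $T_1,\ldots,T_s\in\cT_{\bl^*}$ with $s\leq r+1$.

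Next I turn the convex-combination identities into pointwise bounds using Lemma \ref{lem2}. Feasibility of $\bx^*$ for the $i$-th budget constraint gives $\sum_j\mu_j^*c_i(T_j)=c_i(\bx^*)\leq C_i$, so for each $i$ there is an index $j_i$ (in the support) with $c_i(T_{j_i})\leq C_i$. Similarly $\sum_j\mu_j^*w(T_j)=w^*_{LP}\leq w^*$, so there is an index $j_0$ with $w(T_{j_0})\leq w^*$. Now Lemma \ref{lem2} tells us that any two trees in the support differ in at most $r$ edges; consequently, for any two $T_j,T_k\in\cT_{\bl^*}$,
\[
|c_i(T_j)-c_i(T_k)|\leq r\,c_{\max},\qquad |w(T_j)-w(T_k)|\leq r\,w_{\max},
\]
since each of the at most $r$ swapped edges contributes at most $c_{\max}$ (resp.\ $w_{\max}$) to the difference.

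Fixing any $T_j$ in the support of $\mu^*$ (any $j\in\{1,\ldots,s\}$ in fact works), these two inequalities together with the existence of $j_i$ and $j_0$ above give
\[
c_i(T_j)\leq c_i(T_{j_i})+r\,c_{\max}\leq C_i+r\,c_{\max}\quad(i=1,\ldots,r),\qquad w(T_j)\leq w(T_{j_0})+r\,w_{\max}\leq w^*+r\,w_{\max},
\]
which is exactly the claim of Lemma \ref{lem3}. The main point that requires care is the complementary-slackness step identifying the support of the optimal LP solution with $\cT_{\bl^*}$: one must work in the $\mu$-formulation where the polytope $P_\cT$ is exposed as a convex hull of tree incidence vectors, so that the dual multipliers on the simplex constraint recover $\f(\bl)$ as a plain minimum over $\cT$. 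Everything else is a routine combination of Lemmas \ref{lem1} and \ref{lem2}.
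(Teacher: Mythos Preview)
Your proof is correct and follows essentially the same route as the paper: identify the LP optimum $\bx^*$ as a convex combination of trees in $\cT_{\bl^*}$ via complementary slackness, extract from this convex combination a tree $T_{j_0}$ with $w(T_{j_0})\leq w^*$ and, for each $i$, a tree $T_{j_i}$ with $c_i(T_{j_i})\leq C_i$, and then use Lemma~\ref{lem2} to transfer these bounds to any single $T_j$ at the cost of $rw_{\max}$ and $rc_{\max}$ respectively. The only cosmetic difference is that for the cost constraints you use primal feasibility of $\bx^*$ (an averaging argument), whereas the paper argues on the dual side (if every tree in $\cT_{\bl^*}$ violated the $i$-th budget, then $\f$ would increase under a small increase of $\l_i$, contradicting optimality of $\bl^*$); these are equivalent and you are also more explicit than the paper in spelling out the final step that invokes Lemma~\ref{lem2}.
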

\begin{proof}
Let $\bl=\bl^*$ solve $Dual(\cT)$ and $\bx^*$ solve $LP(\cT)$. Then we have that  $\bx^*$ is a convex combination of $\set{\bx(T):T\in O_\cT(\bl)}$. It follows that there exist $T_0,T_1,\ldots,T_r\in \cT_{\bl}$ such that (i) $w(T_0)\leq w^*$ and (ii) $c_i(T_i)\leq C_i,i=1,2,\ldots,r$. (If $c_i(T_j)>C_i,j\geq 1$ then $\f(\bl^*+\beps)>\f(\bl^*)$ for a sufficiently small perturbation $\beps$. This contradicts the fact that $\bl^*$ maximises $f$.)
\end{proof}
It then follows from Lemma \ref{lem3} that
\beq{Tmin}{
w(T)\leq w^*\text{ and }c_i(T)\leq C_i+r,\,i=1,2,\ldots,r.
}
This almost solves our problem, except that $T$ is not guaranteed to be feasible. We shownext that a small adjustement to $T$ results in an asymptotically optimal feasible solution. 

If $c_i(T)>C_i/2$ then $T$ contains at least $\frac{C_i}{4-C_i/(n-1)}$ edges $e\in X_i=\set{e\in T:c_i(e)\geq C_i/4n}$. Delete $4rn/C_i$ edges of $X_i$ from $T$, for each $i$, to create a forest $F$ for which $w(F)<w^*$ and $c_i(F)\leq C_i-r,\,i=1,2,\ldots,r$. Now observe that if $\G$ is the subgraph of $K_n$ spanned by edges $e$ for which 
\[
w(e)\leq \psi=F^{-1}(n^{-1/(r+1)}\log^{1/r}n)\approx n^{-1/(\a(r+1))}\log^{1/(\a r)}n,\qquad c_i(e)\leq \psi,i=1,2,\ldots,r
\]
 then $\G$ is distributed as $G_{n,p}$ where $p\approx n^{-1}\log^{(r+1)/r}n$. Thus $\G$ is connected w.h.p. and so we can add $4r^2n/C_{\min},(\,C_{\min}=\min_{i=1}^r C_i)$ edges from $\G$ to $F$ to make a spanning tree $\wT$. (The claim that $G_{n,p}$ is connected follows from Erd\H{o}s and R\'enyi \cite{ER}.) We have
\[
w(\wT)\leq w^*+\frac{4r^2n\psi}{C_{\min}}\text{ and }c_i(\wT)\leq C_i-r+\frac{4r^2n\psi}{C_{\min}}<C_i,\,i=1,2,\ldots,r,
\]
as our assumption on the $C_i$ implies that $\frac{n\psi}{C_{\min}}\to 0$. 

Also, to find the $s$ trees, we need only find one tree $T$ and then consider all trees of the form $T+e-f$.

\begin{Remark}
The argument that leads to \eqref{Tmin} is valid for an arbitrary matroid. 
\end{Remark}

Now consider the claim in Remark \ref{rem0}. Theorem 1.1 of \cite{GLL} implies the following: let $G$ be an arbitrary $\d$-regular graph as in Remark \ref{rem0}. If $\d p-\log n\to\infty$ then w.h.p. $G_p$ is connected. Here $G_p$ is obtained from $G$ by independently deleting edges with probability $1-p$.

As just observed we can w.h.p. find a tree $T$ satisfying \eqref{Tmin} in polynomial time.  If $c_i(T)>C_i/2$ then $T$ contains at least $\frac{C_i}{4-C_i/(n-1)}$ edges $e\in X_i=\set{e\in T:c_i(e)\geq C_i/4n}$. Delete $4rn/C_i$ edges of $X_i$ from $T$, for each $i$, to create a forest $F$ for which $w(F)<w^*$ and $c_i(F)\leq C_i-r,\,i=1,2,\ldots,r$. Now observe that if $\G$ is the subgraph of $G$ spanned by edges $e$ for which $w(e)\leq \eta=F^{-1}(\d^{-1/(r+1)}\log^{1/r}n),c_i(e)\leq \eta,i=1,2,\ldots,r$ then $\G$ is distributed as $G_{p}$ where $p\approx \d^{-1}\log^{(r+1)/r}n$. Thus $\G$ is connected w.h.p. and so we can add $4rn/C_i$ edges from $\G$ to $F$ to make a spanning tree $\wT$. We have
\[
w(\wT)\leq w^*+\frac{4^2n\eta}{C_i}\text{ and }c_i(\wT)\leq C_i-r+\frac{4r^2n\eta}{C_i}<C_i,\,i=1,2,\ldots,r,
\]
as our assumption on the $C_i$ implies that $\frac{n\eta}{C_i}\to 0$. 

\section{Matchings}\label{matchings} 
We analyze the algorithm of \cite{BBGS}, but we avoid the enumeration that gives a running time of $n^{O(1/\e)}$. We will only consider bipartite matchings. Our analysis only uses alternating paths and avoids the use blossoms and so the non-bipartite case is almost identical to the bipartite case.

We let $w_{\l}(M)=w(M)+\l (c_{1}(M)-C_{1})$ for $M\in \cM$ and $\l\geq 0$. We consider the dual problem $Dual(\cM,C_1)$: 
\beq{dual2}{
\text{Maximise :\;} \f(\l),\l\geq0\text{ where }\f(\l)=\min\set{w_\l(M):M\in \cM}.
}
We note that solving \eqref{dual2} is equivalent to solving the Linear program $LP(\cM)$:
\mults{
\text{Minimise }\sum_{i,j}w(i,j)x(i,j)\text{ subject to }\\
\sum_{i=1}^nx(i,j)=1,\forall j\in[n]\text{ and }\sum_{j=1}^nx(i,j)=1,\forall i\in[n]\text{ and }\sum_{i,j}c_1(i,j)x_{i,j}\leq C_1.
}
$LP(\cM)$ is a relaxation of $Opt(\cM,C_1)$ and so we should assume that its optimal solution is not integral. This would mean that the constraint $c_1(\bx)\leq C_1$ is tight at the optimum. Here $\bx=(x(i,j),i,j\in[n])$. 

Let $\cM^*(\l)$ denote the members of $\cM$ that minimise $w_{\l}$.
\begin{lemma}\label{cl1}
$|\cM^*(\l)|\leq 2$, with probability one.
\end{lemma}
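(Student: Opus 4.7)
My plan is to adapt the linear-algebraic argument of Lemma \ref{lem1} to this one-parameter setting ($r=1$). Suppose, for contradiction, that $\cM^*(\l)$ contains three distinct matchings $M_1,M_2,M_3$; each satisfies
\[
w(M_j)+\l c_1(M_j)=\f(\l),\qquad j=1,2,3.
\]
Subtracting the $j=1$ equation from the $j=2$ equation determines
\[
\l=\frac{w(M_2)-w(M_1)}{c_1(M_1)-c_1(M_2)},
\]
the denominator being nonzero with probability one since $M_1\ne M_2$ and the costs have a continuous distribution. Substituting this value of $\l$ into the $j=3$ equation and clearing denominators yields the polynomial identity
\[
P\ :=\ w(M_1)\bigl(c_1(M_3)-c_1(M_2)\bigr)+w(M_2)\bigl(c_1(M_1)-c_1(M_3)\bigr)+w(M_3)\bigl(c_1(M_2)-c_1(M_1)\bigr)=0
\]
in the independent continuous random variables $\{w(e),c_1(e)\}_{e\in M_1\cup M_2\cup M_3}$.

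The main (and only mildly delicate) step is to verify that $P$ is not the identically zero polynomial, so that absolute continuity forces $\Pr(P=0)=0$. I would pick any edge $e\in M_1\triangle M_2$; such an edge exists because $M_1\ne M_2$. A short inspection of $P$ shows that the coefficient of $w(e)$ equals $\pm\bigl(c_1(M_i)-c_1(M_j)\bigr)$ for some $i\ne j$ in $\{1,2,3\}$, the specific pair depending on whether $e$ also lies in $M_3$. Since $M_i\ne M_j$, this coefficient is itself a nonzero polynomial in the cost variables — the coefficient of $c_1(e')$ for any $e'\in M_i\triangle M_j$ equals $\pm1$ — so $P$ is a nonzero polynomial, and $\Pr(P=0)=0$ for this fixed triple.

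A union bound over the (finitely many, at most $(n!)^3$) ordered triples of perfect matchings then gives $|\cM^*(\l)|\leq 2$ with probability one, as required. The remark following Lemma \ref{lem1} carries over verbatim: in the $m$-bit discretised model each triple contributes at most $2^{-\Omega(m)}$, which is still affordable by the union bound.
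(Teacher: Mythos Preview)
Your argument is correct and essentially identical to the paper's: both subtract pairs of the equalities $w(M_j)+\l c_1(M_j)=\f(\l)$ to eliminate $\l$ and obtain a nontrivial algebraic relation among the $w(M_j),c_1(M_j)$, which occurs with probability zero; the paper writes this relation as $\tfrac{w(M_1)-w(M_2)}{c_1(M_2)-c_1(M_1)}=\tfrac{w(M_1)-w(M_3)}{c_1(M_3)-c_1(M_1)}$, which after cross-multiplying is exactly your polynomial $P$. Your added verification that $P$ is not identically zero (by exhibiting an edge $e\in M_1\triangle M_2$ whose $w(e)$-coefficient is $\pm(c_1(M_i)-c_1(M_j))$ for distinct $i,j$) fills in a detail the paper leaves implicit, and the union bound and discretised remark match the paper's treatment as well.
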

\begin{proof}
First assume that we are using $U$. Suppose that there are three distinct members of $\cM$ that minimise  $w_{\l}$. This implies that there are three distinct $M_i\in \cM,i=1,2,3$ such that $w(M_i)+\l c_{1}(M_i)=C$ where $C=\f(\l)+\l C_{1}$. But this implies, after eliminating $C,\l$ that $\frac{w(M_1)-w(M_2)}{c_{1}(M_2)-c_{1}(M_1)}=\frac{w(M_1)-w(M_3)}{c_{1}(M_3)-c_{1}(M_1)}$, an event of probability zero.

In the case where we use $U_d$, given $M_1,M_2,M_3$, we see from the previous sentence that this probability is $2^{-m}$. There are $(n!)^3$ choices for the perfect matchings and we can use the union bound.
\end{proof}
So, there exist $M_1,M_2\in \cM$ that satisfy $w(M_1)+\l c_{1}(M_1)=w(M_2)+\l c_{1}(M_2)=\f_k(\l)$. 

Let $\bx(M)$ denote the $n^2$-dimensional $\set{0,1}$ index vector of matching $M$ and let $P_\cM$ denote the convex hull of these incidence vectiors. The optimum solution to $Opt(\cM,C_1)$ lies in the  line segment of $P_\cM$ generated by the incidence vectors of the two matchings minimizing $w_{\l^*}$. So, if we know at least one of $M_1,M_2$ and we know the optimum solution to $LP(\cM)$ then we can construct the other matching. We can find one of $M_1,M_2$ if we know $\l^*$. We just have to solve the assignment problem with weghts $w_{\l^*}$. Because $\f(\l)$ is a concave function, we can find $\l^*$ to within accuracy $2^{-poly(n)}$ by solving $poly(n)$ assignment problems. Alternatively, we can read off $\l^*$ from the solution to the dual of $LP(\cM)$:
\[
\text{Maximise }-C_1\l+\sum_{i=1}^nu_i+\sum_{j=1}^nv_j\text{ subject to }\l\geq 0\text{ and }-c_1(i,j)\l+u_i+v_j\leq w(i,j),\forall i,j. 
\]
Or
\[
\text{Maximise }_{\l\geq 0}\brac{\text{maximum }\sum_{i=1}^nu_i+\sum_{j=1}^nv_j\text{ subject to }
u_i+v_j\leq w_\l(i,j)}.
\]
Assume then that we know $\l^*,M_1,M_2$. Now we cannot have $C_{1}< \min\set{c_{1}(M_1), c_{1}(M_2)}$ else $\f(\l^*+\e)>\f(\l^*)$ for sufficiently small $\e$. This follows from Lemma \ref{cl1}. Assume then that $c_1(M_1)<C_1<c_1(M_2)$. We have
\beq{weak}{
w(M_1)+\l^*(c_{1}(M_1)-C_{1})=w(M_2)+\l^*(c_{1}(M_2)-C_{1})\leq w^*,
}
where the inequality come from weak duality.

Let $C=\set{e_1,e_2,\ldots,e_k}=M_1\oplus M_2$. Let $a_i=\d(e_i)w_{\l^*}(e_i)$ where $\d(e_i)=1$ for $i\in M_2$ and -1 otherwise. Then 
\[
w_{\l^*}(M_1)-w_{\l^*}(M_2)= \sum_{i=1}^ka_i=0
\]
 and so there exists $\ell$ such that 
\beq{1}{
\sum_{j=1}^ta_{\ell+j}\leq 0\text{ for }t=0,1,\ldots,k-1.
}
This is the content of the gasoline lemma of Lov\'asz \cite{L}, Problem 3.21. 

For $t\geq 0$ let $X_t=M_1\cup \set{e_{\ell+j}:j\leq t,\,\ell+j\text{ is odd}}\setminus \set{e_{\ell+j}:j\leq t,\,\ell+j\text{ is even}}$. Let $\t=\max\set{t:c(X_t)\leq C_1}$. Then we must have $\ell+\t$ even, because the $a_{2i-1}$ are positive and the $a_{2i}$ are negative. Note that $M=X_\t\setminus\set{e_1}$ is a matching and $c(M)\leq c(X_t)\leq C_1$ and that $|M|=n-1$. Note that \eqref{1} implies that 
\[
w_{\l^*}(M)\leq w_{\l^*}(X_\t)\leq w_{\l^*}(M_1).
\]
Now if $M^*$ solves $Opt(\cM,C_1)$ then
\[
w(M_1)=w_{\l^*}(M_1)-\l^* C_1\leq w_{\l^*}(M^*)-\l^* C_1\leq w_{\l^*}(M^*)-\l^* c_1(M^*) =w(M^*)
\]
as the $M_i$ minimize $w_{\l^*}$. So,
\mults{
w(X_\t)=w_{\l^*}(X_\t)-\l^* c_1(X_\t)=w_{\l^*}(X_\t)-\l^* C_1+\l^*(C_1- c_1(X_\t)) \leq\\ w_{\l^*}(M_1)-\l^* C_1+\l^*(C_1- c_1(X_\t)) \leq w^*+\l^*(C_1- c_1(X_\t)),
}
where the final inequality is from \eqref{weak}.

Let $f=x_{\ell+\t+1}$. Then, the maximality of $\t$ implies that $c_1(f)>C_1-c_1(X_\t)\geq 0$. So,
\beq{up1}{
w(M)\leq w(X_\t)\leq w^*+\l^* c_1(f)\leq w^*+\l^*.
}
Furthermore, by construction,
\beq{upC1}{
c(M)\leq C_1.
}
At this point, we need to do two things. The first is to bound $\l^*$ and the second is to deal with the fact that $|M|=n-1$. The following lemma deals with $\l^*$ 
\begin{lemma}\label{lsize}
There is a constant $D=D(\a)>0$ depending only on $\a$ such that 
\[
\l^*\leq \frac{Dn^{2-1/\a}}{C_1^2},\quad \text{w.h.p.}
\]
\end{lemma}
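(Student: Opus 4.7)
The plan is to upper-bound $\l^*$ by exhibiting a test matching $M$ in $K_{n,n}$ with $c_1(M)\le C_1/2$ and $w(M)=O(n^{2-1/\a}/C_1)$, and then applying the elementary duality inequality $\l^*\le w(M)/(C_1-c_1(M))$. This inequality holds for any matching $M$ with $c_1(M)<C_1$: by concavity of $\f$, $\f(\l^*)\ge \f(0)=\min_{M'\in\cM} w(M')\ge 0$, while from the definition of $\f$ we have $\f(\l^*)\le w(M)+\l^*(c_1(M)-C_1)$; rearranging gives the claim.

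To construct $M$, I would introduce a Lagrange parameter $\mu>0$ (to be chosen) and let $M_\mu$ denote the minimum weight perfect matching of $K_{n,n}$ with respect to the combined edge weight $Y(e):=w(e)+\mu c_1(e)$. A short convolution computation shows that since $w(e),c_1(e)$ are iid copies of $Z$, the combined variable $Y$ satisfies $\Pr(Y\le y)=(1+o(1))\,K_\mu\,y^{2\a}$ as $y\to 0$, where $K_\mu=C_\a\,\mu^{-\a}$ for a constant $C_\a>0$ depending only on $\a$. Invoking a Walkup/W\"astlund-type estimate for minimum bipartite matchings with iid weights whose CDF is $\approx K\,y^{2\a}$ near $0$ (with exponent $2\a\ge 2>1$), one gets w.h.p.\
\[
w(M_\mu)+\mu\, c_1(M_\mu)\ =\ \sum_{e\in M_\mu}Y(e)\ \le\ A\,\mu^{1/2}\,n^{1-1/(2\a)},
\]
for some constant $A=A(\a)>0$. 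Since both summands on the left are non-negative, this forces $c_1(M_\mu)\le A\mu^{-1/2}\,n^{1-1/(2\a)}$ and $w(M_\mu)\le A\mu^{1/2}\,n^{1-1/(2\a)}$.

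Choosing $\mu:=4A^2\,n^{2-1/\a}/C_1^2$ then gives $c_1(M_\mu)\le C_1/2$ and $w(M_\mu)\le 2A^2\,n^{2-1/\a}/C_1$; plugging into the duality inequality yields $\l^*\le 2w(M_\mu)/C_1\le D\,n^{2-1/\a}/C_1^2$ with $D:=4A^2$, as required.

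The main obstacle is the w.h.p.\ matching estimate displayed above. Two technicalities arise: (i) the asymptotic $F(x)\approx x^\a$ holds only as $x\to 0$, so one must argue that the minimum matching of $Y$ uses edges with small enough weight and cost to validate the scaling analysis (effectively requiring $\mu\lesssim n^{1/\a}$, which is implied by $C_1\gg n^{1-1/\a}$ and holds automatically under $C_1\gg n^{1/2}$ when $\a\le 2$); (ii) one needs concentration around the mean, typically via a Talagrand-type inequality on the random permutation group. A simpler hands-on alternative, at the cost of a polylog factor in the final $D$, is to restrict to edges with $w(e)\le s_w$ and $c_1(e)\le s_c$, where $s_w s_c\approx(\log n/n)^{1/\a}$ and $s_w/s_c=\mu$; the restricted subgraph is distributed as $G_{n,n,p}$ with $p\gg(\log n)/n$ and hence contains a perfect matching $M$ w.h.p.\ with $w(M)\le ns_w$ and $c_1(M)\le ns_c$, giving the required bound up to a polylog factor.
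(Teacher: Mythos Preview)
Your argument is correct and shares with the paper the same key ingredient: both compute that the combined edge weight $Y=w+\m c_1$ has CDF $\approx C_\a\m^{-\a}y^{2\a}$ near zero and then invoke the asymptotic from \cite{FPST} for the minimum-weight perfect matching, giving $\min_M w_\m(M)=\Theta(\m^{1/2}n^{1-1/(2\a)})$ w.h.p. The extraction of the bound on $\l^*$ is organised differently, however. The paper sandwiches the dual function between two explicit concave functions, $\f_A(\l)\leq\f(\l)\leq\f_B(\l)$ with $\f_X(\l)=X\l^{1/2}n^{1-1/(2\a)}-C_1\l$, locates their maximisers $\l_X^*=X^2n^{2-1/\a}/(4C_1^2)$, and shows $\f(\l^*)\geq\f_A(\l_A^*)>\f_B(K^2\l_B^*)$ for suitable $K=O(1)$, forcing $\l^*\leq K^2\l_B^*$. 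Your route is more direct: you need only the \emph{upper} half of the $\Theta$-estimate at a single well-chosen $\m$, together with the trivial $\f(\l^*)\geq\f(0)\geq 0$, and then read off $\l^*\leq w(M_\m)/(C_1-c_1(M_\m))$. This buys a cleaner argument with a weaker black-box requirement (no lower bound on $\min_M w_\l(M)$ is needed, and your threshold-graph alternative even avoids concentration altogether at the price of a polylog); the paper's two-sided sandwich uses more of \cite{FPST} but yields a slightly sharper constant. Both approaches face the same technicality about the range of $\m$ (equivalently $\l$) for which the $F(x)\approx x^\a$ asymptotic governs the relevant edges, which you flag explicitly and the paper handles implicitly through the same rescaling.
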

To deal with the second we prove the following lemma:
\begin{lemma}\label{Mn-1}
Suppose that $M$ is a matching of size at most $n-1$. Then w.h.p. there is an augmenting path that creates a matching $M'$ with (i) $|M'|=|M|+1$, (ii) $w(M')\leq w(M)+3n^{-1/3}$ and (iii) $c_1(M')\leq c_1(M)+3n^{-1/3}$.
\end{lemma}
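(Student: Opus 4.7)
The plan is to exhibit a short $M$-augmenting path from an unmatched left vertex $u$ to an unmatched right vertex $v$ whose new (non-matching) edges all have very small weight and cost; since weights and costs are nonnegative, augmenting along such a path increases $w(M)$ and $c_1(M)$ by at most the sum of the weights (resp.\ costs) of the new edges. Concretely, fix a threshold $\eta$ of order $n^{-1/3}$ (with a constant to be tuned at the end) and let $H$ be the spanning subgraph of $K_{n,n}$ formed by those edges $e$ with $w(e)\le \eta$ and $c_1(e)\le \eta$. By the assumption $F(x)\approx x^\a$ near zero, each edge lies in $H$ independently with probability $p = F(\eta)^2 \approx \eta^{2\a}$, so $H$ is distributed as the random bipartite graph $G_{n,n,p}$.

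Since $|M|\le n-1$ there is at least one unmatched vertex on each side; pick such $u$ and $v$, and perform a two-sided alternating BFS in $H\cup M$ that alternates non-matching $H$-edges with matching $M$-edges. Write $R_i^u,\, L_i^u$ for the right and left vertices reached from $u$ after $2i-1$ and $2i$ alternating steps, and define $L_j^v,\, R_j^v$ analogously from $v$. Standard concentration of BFS neighborhoods in $G_{n,n,p}$ shows that the sizes of these fronts grow geometrically at rate $np$ until they saturate, so with our choice of $\eta$, after a constant number of BFS levels the expected number of $H$-edges between a front from $u$ and a front from $v$ is $\omega(1)$, yielding at least one such edge w.h.p.

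Any such connecting edge splices the two alternating BFS paths into an $M$-augmenting path $P$ from $u$ to $v$ of constant length, whose only new edges are $O(1)$ non-matching ones, each of weight and cost at most $\eta$. Setting $M'=M\triangle P$ gives $|M'|=|M|+1$, while (ii) and (iii) follow from $w(M')-w(M),\, c_1(M')-c_1(M)\le(\text{number of new edges})\cdot\eta$, which can be made at most $3n^{-1/3}$ by tuning the constant in $\eta$ and the depth of the BFS. If $|M|<n-1$ there are several choices of unmatched pair $(u,v)$, which only makes the task easier.

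The main technical hurdle is to justify that BFS in $H$ behaves as in $G_{n,n,p}$ despite the potential correlation between $M$ and the edge weights and costs, since $M$ may have been constructed from the same random data earlier in the algorithm. This is handled by the standard deferred-decisions strategy: condition on $M$ together with the weights and costs of its edges; the weights and costs of the remaining edges are then still i.i.d.\ with the same marginal, so the BFS analysis on $H\setminus M$ goes through.
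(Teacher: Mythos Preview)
Your overall strategy matches the paper's: form the auxiliary graph $H$ of edges with both weight and cost at most a threshold $\eta$ of order $n^{-1/3}$, note that $H$ is distributed as $G_{n,n,p}$, grow alternating BFS trees in $H\cup M$ from the two exposed vertices until the fronts are large, find a joining $H$-edge, and augment. The paper carries this out with the explicit choice $\eta=n^{-1/3}$ (so $p\approx n^{-2/3}$), an expansion lemma for $H$ valid for all sets $S$ with $n^{1/4}\le|S|\le n^{2/3}/\log n$, five BFS levels from each side, and a $2^{2n}$ union bound for the final connecting edge.

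The genuine gap is in your last paragraph. The deferred-decisions claim is false as stated: the matching $M$ handed to this lemma is the output of the algorithm of the preceding subsection and is a function of \emph{all} the edge weights and costs, not just of those on the edges of $M$. Hence, after conditioning on $M$ (and on the weights and costs of its edges), the weights and costs of the remaining edges are no longer i.i.d.\ with the original marginal. (For a toy illustration, if $M$ were the minimum-$w$ perfect matching, the event $\{M=M_0\}$ forces every $M_0$-alternating cycle to have nonnegative $w$-increment, a nontrivial constraint on the non-$M_0$ edges.) The paper avoids this by proving properties of $H$ that hold \emph{uniformly} over all sets, independent of $M$: (i) every $S$ of moderate size satisfies $|N_H(S)|\ge np|S|/4$, and (ii) every pair of vertex sets of size $\Omega(n/\log n)$ spans an $H$-edge, the latter via a union bound over all $2^{2n}$ pairs. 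Because these statements quantify over all sets, they apply to whichever sets the alternating BFS produces, regardless of how $M$ depends on the data. To repair your argument, replace the deferred-decisions paragraph with such uniform expansion and connectivity statements for $H$.
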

It follows from \eqref{up1}, \eqref{upC1} and these two lemmas that we can w.h.p. find a perfect matching $M'$ such that 
\[
w(M')\leq w(M^*)+\frac{Dn^{2-1/\a}}{C_1^2}+3n^{-1/3}\text{ and }c_1(M')\leq c_1(M)+3n^{-1/3}.
\]
Now if $c_1(M)+3n^{-1/3}\leq C_1$ then we are done. Otherwise, we remove an edge from $M'$ of cost at least $3/n^{1/3}$. Such edges exist as we have assumed that $C_1\gg n^{2/3}$. Applying Lemma \ref{Mn-1} again we have a perfect matching $M''$ satisfying
\[
w(M'')\leq w(M^*)+\frac{Dn^{2-1/\a}}{C_1^2}+6n^{-1/3}\text{ and }c_1(M'')\leq C_1.
\]
Note that \cite{FPST}, Theorem 3 implies that $w^*=\Omega(n^{1-1/\a})$ and this will complete the proof of Theorem \ref{th2}, since we have assumed that $C_1\gg n^{1/2}$.
\subsection{Proof of Lemma \ref{lsize}}
\begin{proof}

Here we assume that the weights and costs are i.i.d. copies of a continuous random variable $X$ with $\Pr\brac{X \leq t} \approx t^\a$ as $t \to 0$ so that the density of $X$, call it $f$, satisfies $f(x) \approx \a x^{\a-1}$ as $x \to 0$. For a fixed $\l > 0$, the density $f_Z$ of $Z = X + \l X'$, where $X'$ is an independent copy of $X$, satisfies
\[
f_Z(x) = \int_0^x \l^{-1}f(\l^{-1}t)f(x-t) dt \approx \l^{-\a}\a^2 \int_0^x [t(x-t)]^{\a-1} dt = \l^{-\a}D_\a x^{2\a-1}, \qquad D_\a = \a^2\frac{\Gamma(a+1)^2}{\Gamma(2a+2)},
\]
as $x \to 0$. Thus, $\Pr\brac{Z \leq t} \approx \frac{D_\a}{2\a}\l^{-\a}t^{2\a}$, so by the results from \cite{FPST} (see Theorem 3, the unconstrained case $r=0$ and Section 6) applied to a rescaled version of $Z$ (so that its CDF behaves like $t^{2\a}$), we have w.h.p.,
\[
\phi(\l) + C_1\l = \min_M w_\l(M) =  \l^{1/2}(2\a/D_\a)^{1/(2\a)}\Theta(n^{1-1/(2\a)})
\]

Let $\f_A(\l)=A\l^{1/2}n^{1-1/(2\a)}-C_1\l$ for constant $A>0$.
Thus w.h.p. for some constants $0<A<B$ depending only of $\a$, we have
\[
\f_A(\l)\leq \f(\l)\leq \f_B(\l).
\]
Moreover, $\f_X,X=A,B$ is maximised at $\l_X^*=\frac{X^2n^{2-1/\a}}{4C_1^2}$ and then 
\beq{fX}{
\f_X(\l)=\f_X(\l_X^*)-C_1\brac{\sqrt{\l}-\sqrt{\l_X^*}}^2\text{ where }\f_X(\l_X^*)=\frac{X^2n}{4C_1^2}.
}
If $\l^*$ maximises $\f$ then for $K=O(1)$ we have, using \eqref{fX},
\[
\f(\l^*)\geq \f(\l_A^*)=\frac{A^2n^{2-1/\a}}{4C_1}>\f_B(K^2\l_B^*)=(1-(K-1)^2)\frac{B^2n^{2-1/\a}}{4C_1}. 
\]
It follows that $\l^*\leq K^2\l_B$ where $(1-(K-1)^2)B^2=A^2/2$.
\end{proof}
\subsection{Proof of Lemma \ref{Mn-1}}
\begin{proof}
For this we consider the random bipartite graph $H$ which consists of those edges $e$ for which $w(e),c_1(e)\leq n^{-1/3}$. This is distributed as the random bipartite graph $G_{n,n,p}$ where $p=n^{-2/3}$. Suppose that $a\in B_1,b\in B_2$ are the vertices not covered by $M$. Next let $A$ be the set of vertices in $V_2$ that can be reached by an alternating path of length at most five. We first observe that the minimum/maximum degree in $H$ is at least $\approx np$ w.h.p. See for example Frieze and Karo\'nski \cite{FK}, Theorem 3.4. We show next that w.h.p.
\beq{expand}{
S\subseteq V_1,\,n^{1/4}\leq |S|\leq n_0=\frac{n^{2/3}}{\log n}\text{ implies }|N(S)|\geq np|S|/4.
}
Indeed, if $v\in V_2$ then $\Pr(v\in N(S))=1-(1-p)^{|S|}\geq p|S|/2$. So,
\[
\Pr(\neg\eqref{expand})\leq \sum_{s=n^{1/4}}^{n_0}\binom{n}{s}\Pr(Bin(n,ps/2)\leq nps/4)\leq \sum_{s=n^{1/4}}^{n_0}\bfrac{ne}{s}^s e^{-nps/16}=\sum_{s=n^{1/4}}^{n_0}\bfrac{ne^{1-np/16}}{s}^s=o(1).
\]
Given the property in \eqref{expand} we see that regardless of $M$, there are at least $np/2$ alternating paths of length two, ending in $V_1$. Then there must be at least $np/2\times np/4-2np\geq n^2p^2/10>n_0$ alternating paths of length four ending in $V_1$. Finally we see that $|A|\geq n_0np/4=\Omega(n/\log n)$ and similarly for $|B|$. We then observe that w.h.p. there is an edge of $H$ connecting $A$ and $B$. Indeed, the probability there is no such edge is at most $2^{2n}(1-p)^{\Omega(n^2/\log^2n)}=o(1)$. It follows that we can convert $M$ to a perfect matching at an additonal weight and cost of at most $3n^{-1/3}$.
\end{proof}


\begin{thebibliography}{99}
\bibitem{BBGS} A. Berger, V. Bonifac, F. Grandoni and G. Schaefer, Budgeted matching and budgeted matroid intersection via the gasoline puzzle, {\em Lecture Notes in Computer Science} 5035 (2007) 273-287.
%
\bibitem{ER} P. Erd\H{o}s and A. R\'enyi, On random graphs I, {\em Publ. Math. Debrecen} 6 (1959) 290-297.
%
\bibitem{FPST} A.M. Frieze, W. Pegden, G. Sorkin and T. Tkocz, \href{https://arxiv.org/pdf/1910.08977.pdf}{Minimum-weight combinatorial structures under random cost-constraints}.
%
\bibitem{FK}  A.M. Frieze and M. Karo\'nski, Introduction to Random Graphs, Cambridge University Press, 2015.
%
\bibitem{FT1} A.M. Frieze, and T. Tkocz, \href{https://arxiv.org/pdf/1905.01229.pdf}{A randomly weighted minimum spanning tree with a random cost constraint}.
%
\bibitem{FT2}  A.M. Frieze, and T. Tkocz, \href{https://arxiv.org/pdf/1907.03375.pdf}{A randomly weighted minimum arborescence with a random cost constraint}
%
\bibitem{GR} M. Goemans and R. Ravi, {\em The constrained minimum spanning tree problem}, Fifth Scandinavian Workshop on Algorithm Theory, LNCS 1097, Reykjavik, Iceland (1996) 66-75.
%
\bibitem{GLL} A. Gupta, E. Lee and J. Li, The Connectivity Threshold for Dense Graphs.
%
\bibitem{GRSZ} F. Grandoni, R. Ravi, M. Singh and R. Zenklusen, New approaches to multi-objective optimization, {\em Mathematical Programming} 146, (2014) 525-554.
%
\bibitem{L} L. Lov\'asz, Combinatorial Problems and Exercises, AMS Chelsea Publishing, 2nd Edition, 2007.
%
\bibitem{W} D. Walkup, On the expected value of a random asignment problem, {\em SIAM Journal on Computing} 8 (1979) 440-442.
\end{thebibliography}
\end{document}